\title{An Interesting Structural Property Related to the Problem of Computing All the Best Swap Edges of a Tree Spanner in Unweighted Graphs}
\titlerunning{Property Related to the ABSE Problem of a Tree Spanner in Unweighted Graphs}
\author{Davide Bil\`o}{Department of Humanities and Social Sciences, University of Sassari, Italy}{davidebilo@uniss.it}{https://orcid.org/0000-0003-3169-4300}{}
\author{Kleitos Papadopoulos}{InSPIRE, Agamemnonos 20, Nicosia, 1041, Cyprus}{kleitospa@gmail.com}{https://orcid.org/0000-0002-7086-0335}{}
\authorrunning{D. Bil\`o and K. Papadopoulos}
\subjclass{G.2.2 [Graph Theory] Graph algorithms, Trees}
\keywords{Transient edge failure, best swap edges, tree spanner}
\begin{document}

\maketitle

\begin{abstract}
In this draft we prove an interesting structural property related to the problem of computing {\em all the best swap edges} of a {\em tree spanner} in unweighted graphs. Previous papers show that the maximum stretch factor of the tree where a failing edge is temporarily swapped with any other available edge that reconnects the tree depends only on the {\em critical edge}. However, in principle, each of the $O(n^2)$ swap edges, where $n$ is the number of vertices of the tree, may have its own critical edge. In this draft we show that there are at most 6 critical edges, i.e., each tree edge $e$ has a {\em critical set} of size at most 6 such that, a critical edge of each swap edge of $e$ is contained in the critical set.
\end{abstract}

\section{Basic definitions}
Let $G = (V, E)$ be a $2$-edge-connected undirected graph of $n$ vertices. Given an edge $e \in E$, we denote by $G-e=(V,E\setminus\{e\})$ the graph obtained after the removal of $e$ from $G$. Let $T$ be a tree spanning $V$ which is also a subgraph of $G$. Given an edge $e$ of $T$, let $S_e$ be the set of all the \emph{swap edges} for $e$, i.e., all edges in $E \setminus \{ e \}$ whose endpoints lie in two different connected components of $T-e$.
For any edge $e$ of $T$ and $f \in S_e$, let $T_{e/f}$ denote the \textit{swap tree} obtained from $T$ by replacing $e$ with $f$. Given two vertices $x,y \in V$, we denote by $d_G(x,y)$ the \emph{distance} between $x$ and $y$ in $G$, i.e., the number of edges contained in a shortest path in $G$ between $x$ and $y$. We define the \textit{stretch factor} $\sigma_G(T)$ of $T$ w.r.t. $G$ as 
\begin{equation*}
\sigma_G(T) = \max_{x,y \in V}  \frac{d_T(x,y)}{d_G(x,y)}.
\end{equation*}

\begin{definition}[Best Swap Edge]
Let $e$ be an edge of $T$. An edge $f^* \in S_e$ is a \textit{best swap edge} for $e$ if $f^* \in \arg\min_{f \in S_e} \sigma_{G-e}\big(T_{e/f}\big)$.
\end{definition}
For a swap edge $f=(x,y) \in S_e$ we say that $g=(a,b) \in S_e$ is {\em critical} for $f$ if $\sigma_S(T_{e/f})=d_T(x,a)+1+d_T(b,y)$. A set $C$ is {\em critical} for $e$ if, for every swap edge $f \in S_e$, $C$ contains a critical edge for $f$.

\section{Our result}
In this draft we show that there are at most 6 critical edges, i.e., each tree edge $e$ has a {\em critical set} of size at most 6 such that, a critical edge of each swap edge of $e$ is contained in the critical set.

Let $e$ be any fixed edge of $T$. Let $X$ be the set of vertices contained in one of the two connected components of $T-e$ (ties are chosen arbitrarily). Let $Y=V(G) \setminus X$ be the vertices contained in the other connected component of $T-e$. For every $x \in X$ and for every two (not necessarily distinct) edges $g=(a,b),g'=(a',b') \in S_e$, with $a,a' \in X$ and $b,b' \in Y$, we define 
\[
\phi_x(g,g'):=d_T(x,a)+d_T(b,b')+d_T(a',x).
\]

We denote by $(g_x,g'_x)$ a pair of (swap) edges in $\arg\max_{(g,g') \in S(e)^2}\phi_x(g,g')$. We use the notation $g_x=(a_x,b_x)$ and $g'_x=(a'_x,b'_x)$, with $a_x,a'_x \in X$ and $b_x,b'_x \in Y$. We denote by $T_X$ the subtree of $T$ induced by the vertices in $X$. Finally, for every edge $e'=(x,y)$ of $T_X$, we denote by $U(e',x)$ and $U(e',y)$ the partition of the vertices $X$ induced by $T_X - e'$ and containing $x$ and $y$, respectively. In~\cite{DBLP:conf/sirocco/BiloCG0P15} it is shown that for every $f=(x,y) \in S_e$, a critical edge for $f$ is either $g_x$ or $g'_x$.

\begin{lemma}\label{lm:useful_lemma}
Let $e'=(x,z) \in E(T_X)$ such that $\phi_z(g_x,g'_x) < \phi_z(g_z,g'_z)$. Then, one of the following two conditions is satisfied:
\begin{itemize}
\item  $a_{z}, a'_{z} \in U(e',x)$ and $\{a_x,a'_x\} \cap U(e',z) \neq \emptyset$;
\item $a_x,a'_x \in U(e',z)$ and $\{a_z,a'_z\} \cap U(e',x) \neq \emptyset$.
\end{itemize}
\end{lemma}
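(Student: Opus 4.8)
The plan is to reduce the whole statement to an elementary ``$\pm1$'' distance bookkeeping, after observing that the only part of $\phi$ that moves when we change the base vertex is the $T_X$-part, while the $Y$-part $d_T(b,b')$ is inert. Since $X$ is one of the two components of $T-e$, a path in $T$ between two vertices of $X$ never uses $e$, so $d_T(u,v)=d_{T_X}(u,v)$ for all $u,v\in X$; hence for a pair of swap edges $g=(a,b),g'=(a',b')$ with $a,a'\in X$ and $b,b'\in Y$ we may write
\[
\phi_x(g,g')=d_{T_X}(x,a)+d_{T_X}(x,a')+d_T(b,b'),
\]
and the last summand does not depend on the base vertex.

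\textbf{Step 1: the shift identity.}
First I would record that for the edge $e'=(x,z)$ of $T_X$ and any $v\in X$, the path from $x$ (resp.\ $z$) to $v$ uses $e'$ unless $v$ already lies on the $x$-side (resp.\ $z$-side); therefore $d_{T_X}(z,v)-d_{T_X}(x,v)=+1$ if $v\in U(e',x)$ and $=-1$ if $v\in U(e',z)$. Setting $\varepsilon(v):=+1$ for $v\in U(e',x)$ and $\varepsilon(v):=-1$ for $v\in U(e',z)$, this yields, for every pair $P=(g,g')$ of swap edges with $X$-endpoints $a,a'$, a \emph{shift}
\[
\phi_z(P)-\phi_x(P)=\varepsilon(a)+\varepsilon(a')\in\{-2,0,2\},
\]
where the value is $+2$ exactly when $a,a'\in U(e',x)$, is $0$ exactly when one of $a,a'$ lies in $U(e',x)$ and the other in $U(e',z)$, and is $-2$ exactly when $a,a'\in U(e',z)$.

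\textbf{Step 2: combine with optimality.}
Maximality of $(g_x,g'_x)$ at $x$ gives $\phi_x(g_x,g'_x)\ge\phi_x(g_z,g'_z)$, and the hypothesis gives $\phi_z(g_x,g'_x)<\phi_z(g_z,g'_z)$. Writing $\phi_z(P)=\phi_x(P)+(\phi_z(P)-\phi_x(P))$ for $P\in\{(g_x,g'_x),(g_z,g'_z)\}$ and subtracting the first inequality from the second, the shift of $(g_z,g'_z)$ must \emph{strictly exceed} the shift of $(g_x,g'_x)$. Since both shifts are even numbers in $\{-2,0,2\}$, either (i) the shift of $(g_x,g'_x)$ equals $-2$, or (ii) the shift of $(g_x,g'_x)$ equals $0$ and that of $(g_z,g'_z)$ equals $2$. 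In case (i) we have $a_x,a'_x\in U(e',z)$, and the shift of $(g_z,g'_z)$ is $0$ or $2$, so at least one of $a_z,a'_z$ lies in $U(e',x)$: this is the second bullet. In case (ii) we have $a_z,a'_z\in U(e',x)$ and exactly one of $a_x,a'_x$ lies in $U(e',z)$: this is the first bullet. Hence one of the two conditions always holds.

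\textbf{Main obstacle.}
There is no geometric case analysis on the shape of $T$; the only genuine care needed is bookkeeping in Step 2 --- keeping straight which pair is optimal at which vertex and in which direction each inequality points, and reading a shift value of $0$ correctly as ``the two $X$-endpoints straddle $e'$'', so that a $0$-shift on the $(g_z,g'_z)$ side delivers the required nonempty intersection with $U(e',x)$ and a $0$-shift on the $(g_x,g'_x)$ side delivers the required nonempty intersection with $U(e',z)$. Everything else is the single distance identity of Step 1 together with the two $\arg\max$ inequalities.
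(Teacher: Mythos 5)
Your proof is correct and rests on exactly the same mechanism as the paper's: the observation that moving the base vertex across $e'$ changes $\phi$ of a pair by $\varepsilon(a)+\varepsilon(a')\in\{-2,0,2\}$, combined with the two $\arg\max$ inequalities $\phi_x(g_x,g'_x)\ge\phi_x(g_z,g'_z)$ and $\phi_z(g_x,g'_x)<\phi_z(g_z,g'_z)$. The only difference is presentational: the paper runs three separate contradiction arguments, whereas you package the same content as a single exhaustive case analysis on the pair of shifts via $s_x<s_z$, which is arguably cleaner.
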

\begin{proof}
We first show that either $a_{z}, a'_{z} \in U(e',x)$ or $a_x,a'_x \in U(e',z)$ (or even both conditions) must hold. For the sake of contradiction, assume this is not the case. W.l.o.g., let $a_x,a_z \in U(e',x)$ and $a'_x,a'_z \in U(e',z)$. In this case, $\phi_z(g_z,g'_z) = \phi_x(g_z,g'_z) \leq \phi_x(g_x,g'_x) = \phi_z(g_z,g'_z)$. Therefore, $\phi_z(g_z,g'_z)$ cannot be stricly smaller that $\phi_z(g_x,g'_x)$.

Now we show that $\{a_x,a'_x\} \not \subseteq U(e',z)$. For the sake of contradiction, assume that $a_x,a'_x \in U(e',x)$. We have that
$\phi_z(g_z,g'_z) \leq \phi_x(g_z,g'_z)+2 \leq \phi_{x}(g_x,g'_x) + 2 = \phi_z(g_x,g'_x)$.
Therefore, $\phi_z(g_z,g'_z)$ cannot be stricly smaller that $\phi_z(g_x,g'_x)$ in this case. 

Finally, we show that $\{a_z,a'_z\} \not \subseteq U(e',x)$. For the sake of contradiction, assume that $a_z,a'_z \in U(e',z)$. From
$\phi_z(g_x,g'_x) < \phi_z(g_z,g'_z)$ we derive $\phi_x(g_x,g'_x) \leq \phi_z(g_x,g'_x)+2 < 
\phi_z(g_z,g'_z)+2 = \phi_x(g_x,g'_x)$, thus contradicting the choice of $g_x$ and $g'_x$. 
The claim follows.
\end{proof}

\begin{theorem}[6-critical-set theorem]\label{thm:6_critical_set}
For every edge $e$ of $T$, there exists a critical set of $e$ having size at most 6.
\end{theorem}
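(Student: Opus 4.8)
The plan is to split the proof into a reduction and a rooted sweep. For the reduction, note first that for any pair $(g,g')=\big((a,b),(a',b')\big)$ with $a,a'\in X$, writing $P$ for the path in $T_X$ joining $a$ and $a'$, the elementary tree identity $d_T(x,a)+d_T(x,a')=d_T(a,a')+2\,d_T(x,P)$ gives $\phi_x(g,g')=2\,d_T(x,P)+c_{g,g'}$, where $c_{g,g'}:=d_T(a,a')+d_T(b,b')$ does not depend on $x$ and $d_T(x,P)$ denotes the distance from $x$ to the vertex set of $P$. Next, recall that $(g_x,g'_x)$ is only required to be \emph{some} maximiser of $\phi_x$, and that the cited result of \cite{DBLP:conf/sirocco/BiloCG0P15} — a critical edge of $f=(x,y)$ lies in $\{g_x,g'_x\}$ — holds for any such maximiser, since its proof only invokes the maximality of $(g_x,g'_x)$. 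Therefore it is enough to produce a set $C\subseteq S_e$ with $|C|\le 6$ such that every $x\in X$ admits a $\phi_x$-maximising pair with both components in $C$: declaring that pair to be $(g_x,g'_x)$ then makes $C$ critical for $e$.

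To build $C$, root $T_X$ at a vertex $r$ maximising $\Phi(x):=\max_{(g,g')\in S_e^2}\phi_x(g,g')$ over $X$, and assign pairs top-down: at a non-root vertex $x$ with parent $z$, if the pair already assigned to $z$ still maximises $\phi_x$ then copy it to $x$, and otherwise call $x$ a \emph{switch} and assign to it any $\phi_x$-maximiser. Every assigned pair is then either $r$'s pair or a switch pair, so letting $C$ be the set of all swap edges occurring in assigned pairs, the whole theorem reduces to showing that \emph{at most three distinct pairs are ever assigned}.

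The engine for this is Lemma~\ref{lm:useful_lemma}. At a switch $x$ with parent $z$, the very fact that $z$'s pair fails to maximise $\phi_x$ is, after exchanging the two named endpoints of the cut, the hypothesis of the lemma for $e'=(x,z)$; so, writing $U_{\downarrow}=U(e',x)$ for the pendant subtree rooted at $x$ and $U_{\uparrow}=U(e',z)$ for its complement, one of the following holds: \textbf{(A)} both $X$-endpoints of $x$'s new pair lie in $U_{\uparrow}$ while $z$'s old pair has an $X$-endpoint in $U_{\downarrow}$; or \textbf{(B)} both $X$-endpoints of $z$'s old pair lie in $U_{\downarrow}$ while $x$'s new pair has an $X$-endpoint in $U_{\uparrow}$. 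From this I would extract, along any root-to-leaf path: every switch forces its old pair to have an $X$-endpoint inside the switch's subtree $U_{\downarrow}$; a type-(A) switch is terminal (its new pair has both $X$-endpoints outside $U_{\downarrow}$, hence outside every strictly smaller subtree below, so no lower switch is possible); and a type-(B) switch can be followed only by a type-(A) switch (its new pair still has an $X$-endpoint outside $U_{\downarrow}$). In particular each root-to-leaf path carries at most two switches.

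It remains to bound the \emph{total} number of assigned pairs by three, and this is the step I expect to be the real obstacle. The minimal switches (those with no switch ancestor) form an antichain, so their subtrees $U_{\downarrow}$ are pairwise disjoint; each of them has an $X$-endpoint of $r$'s pair in its subtree, and $r$'s pair has at most two $X$-endpoints, whence there are at most two minimal switches — and a type-(B) minimal switch must be unique, since it already contains \emph{both} $X$-endpoints of $r$'s pair. If every minimal switch is of type (A), each is terminal and at most $1+2=3$ pairs appear. If instead there is a unique type-(B) minimal switch $x$, then every further switch lies below $x$, is of type (A), and is terminal, and requires an $X$-endpoint of $x$'s pair below $x$; but $x$'s pair, being of type (B), keeps an $X$-endpoint above $x$, so it has at most one $X$-endpoint below $x$, and hence there is at most one further switch, again giving at most $1+1+1=3$ pairs. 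Either way $|C|\le 6$, and the reduction finishes the proof. The delicate points are fixing the exact quantifiers of Lemma~\ref{lm:useful_lemma} under the endpoint-renaming and checking that the right invariants are propagated down the tree, namely that at each switch the old pair enters $U_{\downarrow}$ and the new pair keeps a foot in $U_{\uparrow}$; degenerate cases, where a pair uses a single swap edge or two $X$-endpoints coincide, only make the counts smaller.
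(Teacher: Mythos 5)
Your proof is correct and rests on essentially the same ingredients as the paper's: Lemma~\ref{lm:useful_lemma} applied at every point where the optimal pair changes, the observation that a pair with no $X$-endpoint inside a pendant subtree stays optimal throughout that subtree, and the pigeonhole on the two $X$-endpoints of the base pair to cap the number of distinct optimal pairs at three. The only difference is bookkeeping: you organize the argument as a rooted top-down sweep with an explicit (A)/(B) classification of switches and an antichain/disjointness count, whereas the paper picks change-edges one at a time, normalizes the first by a w.l.o.g., and derives a contradiction at a hypothetical third change point; your packaging makes explicit the case the paper dismisses by symmetry, but it is the same proof.
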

\begin{proof}
Let $e$ be any fixed edge of $T$. Let $e'=(x,z) \in E(T_X)$ such that $\phi_z(g_x,g'_x) < \phi_z(g_z,g'_z)$. If such an edge does not exist, then the critical set of $e$ has size at most 2. Therefore, we assume that such an edge exists. From Lemma~\ref{lm:useful_lemma}, one of the following two conditions is satisfied:
\begin{itemize}
\item  $a_{z}, a'_{z} \in U(e',x)$ and $\{a_x,a'_x\} \cap U(e',z) \neq \emptyset$;
\item $a_x,a'_x \in U(e',z)$ and $\{a_z,a'_z\} \cap U(e',x) \neq \emptyset$.
\end{itemize}
W.l.o.g., we assume that $a_{z}, a'_{z} \in U(e',x)$ and $\{a_x,a'_x\} \cap U(e',z) \neq \emptyset$. We prove that for every $z' \in U(e',z)$, $\{g_{z'},g'_{z'}\}=\{g_z,g'_z\}$. Indeed, $\phi_{z'}(g_{z'},g'_{z'}) \leq \phi_z(g_{z'},g'_{z'})+2d_T(z,z') \leq \phi_z(g_z,g'_z)+2d_T(z,z')=\phi_{z'}(g_z,g'_z)$. As a consequence, if for every vertex $x' \in U(e',x)$, $\{g_{x'},g'_{x'}\}=\{g_x,g'_x\}$, then $\{g_x,g'_x,g_z,g'_z\}$ would be a critical set of $e$ of size at most 4. Therefore, we only need to prove the claim when there exists a vertex $y \in U(e',x)$ such that $\{g_y,g'_y\} \neq \{g_x,g'_x\}$.

Let $x' \in U(e',x)$ be the vertex closest to $x$ such that $\{g_{x'},g'_{x'}\} = \{g_x,g'_x\}$ and there exists a neighbor $y$ of $x'$ in $U(e',x)$ such that $\phi_y(g_x,g'_x) < \phi_y(g_y,g'_y)$. Let $e''=(x',y)$. From Lemma~\ref{lm:useful_lemma}, one of the following two conditions is satisfied:
\begin{itemize}
\item  $a_{y}, a'_{y} \in U(e'',x')$ and $\{a_x,a'_x\} \cap U(e'',y) \neq \emptyset$;
\item $a_x,a'_x \in U(e'',y)$ and $\{a_y,a'_y\} \cap U(e'',x') \neq \emptyset$.
\end{itemize}
Since $U(e',z) \subseteq U(e'',x')$, we have that $\{a_x,a'_x\} \cap U(e'',x')\neq \emptyset$. Therefore, we can exclude the second of the two conditions and claim that $a_{y}, a'_{y} \in U(e'',x')$ as well as $\{a_x,a'_x\} \cap U(e'',y) \neq \emptyset$. We prove that for every $y' \in U(e'',y)$, $\{g_{y'},g'_{y'}\}=\{g_y,g'_y\}$. Indeed, $\phi_{y'}(g_{y'},g'_{y'}) \leq \phi_y(g_{y'},g'_{y'})+2d_T(y,y') \leq \phi_y(g_y,g'_y)+2d_T(y,y')=\phi_{y'}(g_y,g'_y)$. As a consequence, if for every vertex $x' \in U(e',x) \cap U(e'',x')$, $\{g_{x'},g'_{x'}\}=\{g_x,g'_x\}$, then $\{g_x,g'_x,g_z,g'_z,g_y,g'_y\}$ would be a critical set of $e$ of size at most 6. Therefore, we only need to prove the claim when there exists a vertex $t \in U(e',x)\cap U(e'',x')$ such that $\{g_t,g'_t\} \neq \{g_x,g'_x\}$. We conclude the proof by showing that such a vertex cannot exist. For the sake of contradiction, let $x'' \in U(e',x)\cap U(e'',x')$ be the vertex that minimizes the sum of distances from itself to both $x$ and $x'$ such that $\{g_{x''},g'_{x''}\} = \{g_x,g'_x\}$ and there exists a neighbor $t$ of $x''$ in $U(e',x) \cap U(e'',x')$ such that $\phi_t(g_x,g'_x) < \phi_{t}(g_{t},g'_{t})$. Let $e'''=(x'',t)$. From Lemma~\ref{lm:useful_lemma}, one of the following two conditions is satisfied:
\begin{itemize}
\item  $a_{t}, a'_{t} \in U(e''',x'')$ and $\{a_x,a'_x\} \cap U(e''',t) \neq \emptyset$;
\item $a_x,a'_x \in U(e''',t)$ and $\{a_y,a'_y\} \cap U(e''',x'') \neq \emptyset$.
\end{itemize}
Since $U(e',z),U(e'',y) \subseteq U(e''',x'')$, then $a_x,a'_x \in U(e''',x'')$. As a consequence, none of the two conditions can be satisfied. Hence, $t$ does not exist. This completes the proof.
\end{proof}




\bibliography{bibliography}

\end{document}